\documentclass[pra,twocolumn,floatfix,aps,superscriptaddress,showpacs,amsfonts,10pt]{revtex4}
\usepackage{graphicx}
\usepackage{array, color}
\usepackage{amsthm}
\usepackage{amsmath}
\usepackage{amssymb}
\newtheorem{theorem}{Theorem}

\begin{document}
\title{Coherence distribution in multipartite systems}

\author{Zhengjun Xi}
\email{xizhengjun@snnu.edu.cn}
\affiliation{College of Computer Science, Shaanxi Normal University, Xi'an 710062,
China}

\date{\today}

\begin{abstract}
This paper examines the coherence in multipartite systems. We first discuss the distribution of total coherence in a given multipartite quantum state into discord between subsystems and
coherent dissonance in each individual subsystem, using the relative entropy as a distance measure. Then
we give some trade-off relations between various types of coherence and discord within a bipartite system,
 and extend these results to the multipartite setting. Finally, the change of coherence in entanglement distribution is studied.
\end{abstract}
\maketitle
\section{Introduction}
Quantum coherence arising from quantum superposition plays a central role in
quantum mechanics, and it is also a common necessary condition for
entanglement and other types of quantum correlations such as Bell nonlocality and quantum discord.
Recently, researchers have begun to develop a resource-theoretic framework for
understanding quantum coherence~\cite{Aberg06,Baumgratz13}; for more discussions we refer to~\cite{Hu17,Streltsov17}.

It is well-known that nonclassical correlation (e.g., entanglement, discord) between subsystems is a form of coherence, but the converse is not necessarily true:
 coherence may also exist in individual subsystems. As suggested by Modi's $et$ $al$~\cite{Modi10}, for a quantum state $\rho$, its nonclassical correlation is measured by the relative entropy of discord, which indicates the distance of $\rho$ to the nearest classical state $\chi_\rho$. Since coherence is a basis-dependent quantity, the nearest classical state $\chi_\rho$ may not be diagonalizable under the reference incoherent states. This means that $\chi_\rho$ may possess some coherence located in the subsystems. More extremely, product states in which no nonclassical correlation exists at all can be maximally coherent. For example, the product state $|+\rangle |+\rangle$ where $|+\rangle = \frac 1{\sqrt{2}} (|0\rangle+|1\rangle)$ on a two qubit system is a maximally coherent state  with the basis $\{|ij\rangle\}^1_{i,j=0}$. Despite the considerable attempts to understand this phenomenon~\cite{Streltsov15, Adesso15,Yao15, Hu17a, Xi15, Ma16,Radhakrishnan16,Chitambar16}, one of the most important questions remains unresolved: How can we quantify the collective coherence and local coherence, and characterize the relations between them and the nonclassical correlation in a multipartite system? The answer to this question is also of importance for the analysis of quantum algorithms where coherent superposition is often essential.

In this paper, we examine the distribution of total coherence in a given multipartite quantum state into discord and
coherent dissonance using the relative entropy as a distance measure. Trade-off relations between various types of coherence and discord within both bipartite and multipartite systems are also considered.
The paper is organized as follows. In Sec.~\ref{sec:ardc}, we recall a unified measure to quantify different correlations like discord, coherence, and coherent dissonance of multipartite quantum states.
In Sec.~\ref{sec:com}, we give some relations between these correlations for bipartite systems, when local measurements on one party are considered.
These results are then generalized to the multipartite setting. We discuss the change of coherence in entanglement distribution in Sec.~\ref{sec:ms}, and
Sec.~\ref{sec:con} is devoted to a brief conclusion.

\section{Additivity relations for discord and coherence}\label{sec:ardc}
Following the framework in~\cite{Modi10}, we employ relative entropy as a measure of distance to characterize the separation of total coherence into discord and coherent dissonance. For an $N$-partite system $\mathcal{H}$, let $\{|k_i^j\rangle : 1\leq j\leq n_i\}$ be a fixed basis for system $i$ where $1\leq i\leq N$, and $\mathcal{K}:= \{|\vec{k}\rangle\}$ the product basis of $\mathcal{H}$ induced by them.
The set of incoherent states with respect to $\mathcal{K}$, denoted $\mathcal{I}^{\mathcal{K}}_N$, contains all locally distinguishable states
\begin{equation}
\sum_{|\vec{b}\rangle\in \mathcal{K}}p_{\vec{k}}|\vec{k}\rangle\langle\vec{k}|,
 \end{equation}
 where $\{p_{\vec{k}}\}$ is a joint probability distribution.  Note that $\mathcal{I}^{\mathcal{K}}_N$ is a convex and compact set.
Furthermore, a state is said to be classical, if it is incoherent with respect to some product basis of $\mathcal{H}$.
We denote the set of classical states by $\mathcal{I}_N$. Obviously,
\begin{equation}
\mathcal{I}_N :=\bigcup_{\mathcal{K}}\mathcal{I}^{\mathcal{K}}_N.
\end{equation}

Let  $\rho$ be an $N$-partite state, and $\mathcal{K}$ a product basis of $\mathcal{H}$. The \emph{coherence} of $\rho$
with respect to $\mathcal{K}$ can be measured by the relative entropy of discord,
which is the distance between $\rho$ and the nearest classical state in $\mathcal{I}^{\mathcal{K}}_N$. That is,
\begin{equation}\label{def:coh}
C^\mathcal{K}(\rho):=\min_{\delta\in\mathcal{I}^{\mathcal{K}}_N}S(\rho||\delta),
\end{equation}
We denote by $\delta_\rho^{\mathcal{K}}:=\sum_{|\vec{k}\rangle\in \mathcal{K}}\langle\vec{k}|\rho|\vec{k}\rangle|\vec{k}\rangle\langle\vec{k}|$ this nearest state achieving the minimum in Eq.(\ref{def:coh}). It is well known that coherence has a nice closed expression, given by the entropy change caused by the dephasing operation on the state:
\begin{equation}\label{eq:coh_e}
C^\mathcal{K}(\rho)=S(\delta^\mathcal{K}_\rho)-S(\rho).
\end{equation}

On the other hand, we define the \emph{discord} of $\rho$ as the nonclassical correlation between different subsystems, which is characterized by the
smallest distance between $\rho$ and any classical state. That is,
\begin{equation}\label{def:K_discord}
Q(\rho):=\min_{\chi\in\mathcal{I}_N}S(\rho||\chi).
\end{equation}
Note that the set $\mathcal{I}_N$ is not a convex set; mixing two classical states written in different bases can give rise to a nonclassical state.
Since the minimization of the relative entropy over
classical states is identical to the minimization of the entropy $S(\chi)$ over the
choice of local basis $|\vec{k}\rangle$, Modi $et.$ $al.$ gave a useful expression~\cite{Modi10}
\begin{equation}\label{eq:discord_e}
Q(\rho)=S(\chi_\rho)-S(\rho),
\end{equation}
where $S(\chi_\rho)=\min_{\{|\vec{k}\rangle\}}S(\sum_{\vec{k}}|\vec{k}\rangle\langle\vec{k}|\rho|\vec{k}\rangle\langle\vec{k}|)$.
Note that discord is independent of the choice of basis, from the point  of view of~\cite{Radhakrishnan16},
it is intrinsic coherence.

We denote by $\chi_\rho$ this nearest state achieving the minimum in Eq.(\ref{def:K_discord}). Note that  $\chi_\rho$ may not be diagonalizable under the basis $\mathcal{K}$, thus it may still have some coherence (with respect to $\mathcal{K}$) inside the individual subsystems, which can be captured with the help of coherence just defined above:
\begin{equation}
D^\mathcal{K}(\rho):= C^\mathcal{K}(\chi_\rho) = \min_{\delta\in\mathcal{I}^{\mathcal{K}}_N}S(\chi_\rho||\delta).
\end{equation}
We call it coherent dissonance, which is similar to the quantum dissonance defined in~\cite{Modi10}, where quantum dissonance is defined as nonclassical correlations which exclude entanglement.
From Eq.(\ref{eq:coh_e}), we have
\begin{equation}\label{eq:diss_e}
D^\mathcal{K}(\rho)=S(\delta_{\chi_\rho}^\mathcal{K})-S(\chi_\rho).
\end{equation}
Note that the dissonance defined here is different from the local coherence in~\cite{Radhakrishnan16} which is the relative entropy between the
nearest separable state and the nearest incoherent state of $\rho$.

The results above give us a method to compute discord and coherence, they also give us the following additivity
relations:
\begin{theorem}\label{theorem_qcd_1}
For any $N$-partite quantum state $\rho$, the following inequality holds:
\begin{equation}\label{eq:dist_cc}
Q(\rho)\leq C^{\mathcal{K}}(\rho)\leq Q(\rho)+D^{\mathcal{K}}(\rho).
\end{equation}
\end{theorem}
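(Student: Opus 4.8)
The plan is to treat the two inequalities in Eq.~(\ref{eq:dist_cc}) separately, reducing each to the closed forms recorded in Eqs.~(\ref{eq:coh_e}), (\ref{eq:discord_e}), (\ref{eq:diss_e}). The lower bound $Q(\rho)\le C^{\mathcal{K}}(\rho)$ is the easy half: any state incoherent with respect to the fixed product basis $\mathcal{K}$ is in particular classical, so $\mathcal{I}^{\mathcal{K}}_N\subseteq\mathcal{I}_N$, and minimizing $S(\rho||\cdot)$ over the larger set $\mathcal{I}_N$ cannot increase the value; hence $Q(\rho)=\min_{\chi\in\mathcal{I}_N}S(\rho||\chi)\le\min_{\delta\in\mathcal{I}^{\mathcal{K}}_N}S(\rho||\delta)=C^{\mathcal{K}}(\rho)$. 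In entropic language this just says $S(\chi_\rho)\le S(\delta^{\mathcal{K}}_\rho)$, the basis $\mathcal{K}$ being one admissible choice of local basis in the minimization defining $S(\chi_\rho)$ through Eq.~(\ref{eq:discord_e}).

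For the upper bound I would first turn the statement into a pure entropy inequality. Subtracting the three closed forms gives $Q(\rho)+D^{\mathcal{K}}(\rho)-C^{\mathcal{K}}(\rho)=[S(\chi_\rho)-S(\rho)]+[S(\delta^{\mathcal{K}}_{\chi_\rho})-S(\chi_\rho)]-[S(\delta^{\mathcal{K}}_\rho)-S(\rho)]=S(\delta^{\mathcal{K}}_{\chi_\rho})-S(\delta^{\mathcal{K}}_\rho)$, so the claim is equivalent to $S(\delta^{\mathcal{K}}_\rho)\le S(\delta^{\mathcal{K}}_{\chi_\rho})$: fully dephasing the nearest classical state $\chi_\rho$ in the basis $\mathcal{K}$ is at least as entropic as fully dephasing $\rho$ itself in $\mathcal{K}$. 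The natural route is to use $\delta^{\mathcal{K}}_{\chi_\rho}$ as a witness: since $\delta^{\mathcal{K}}_{\chi_\rho}\in\mathcal{I}^{\mathcal{K}}_N$ we already have $C^{\mathcal{K}}(\rho)\le S(\rho||\delta^{\mathcal{K}}_{\chi_\rho})$, and it then suffices to establish the triangle-type bound $S(\rho||\delta^{\mathcal{K}}_{\chi_\rho})\le S(\rho||\chi_\rho)+S(\chi_\rho||\delta^{\mathcal{K}}_{\chi_\rho})=Q(\rho)+D^{\mathcal{K}}(\rho)$. To try to reach this I would combine the Pythagorean identity for dephasing, $S(\rho||\delta)=C^{\mathcal{K}}(\rho)+S(\delta^{\mathcal{K}}_\rho||\delta)$ valid for all $\delta\in\mathcal{I}^{\mathcal{K}}_N$ (it holds because $\delta$ is diagonal in $\mathcal{K}$, whence $\mathrm{Tr}(\rho\log\delta)=\mathrm{Tr}(\delta^{\mathcal{K}}_\rho\log\delta)$), with the structural fact that $\chi_\rho$ is itself a dephasing of $\rho$, say $\chi_\rho=\delta^{\mathcal{K}_0}_\rho$ for the entropy-minimizing product basis $\mathcal{K}_0$.

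The step I expect to be the real obstacle is exactly this last one. The triangle inequality for relative entropy does not hold in general, and here the usual substitute is unavailable because the classical set $\mathcal{I}_N$ is not convex, so there is no Pythagorean projection identity $S(\rho||\delta^{\mathcal{K}}_{\chi_\rho})=S(\rho||\chi_\rho)+S(\chi_\rho||\delta^{\mathcal{K}}_{\chi_\rho})$ to fall back on; and the only cheap lower bound on $S(\delta^{\mathcal{K}}_{\chi_\rho})$, namely $S(\delta^{\mathcal{K}}_{\chi_\rho})\ge S(\chi_\rho)$ from the fact that dephasing never decreases entropy, bounds it by the \emph{minimal} dephasing entropy of $\rho$, whereas what we need is a comparison with $S(\delta^{\mathcal{K}}_\rho)$, which is generically larger. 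Closing this gap must genuinely use the extremality of the basis $\mathcal{K}_0$ defining $\chi_\rho$, for instance through a variational argument at the minimizer or by selecting, among the (possibly non-unique) nearest classical states, one that is as nearly diagonal as possible in $\mathcal{K}$; that is where the real content of the proof lies. The remaining bookkeeping, namely rewriting everything via Eqs.~(\ref{eq:coh_e}), (\ref{eq:discord_e}), (\ref{eq:diss_e}) and noting the auxiliary bound $Q(\rho)=S(\rho||\chi_\rho)\ge S(\delta^{\mathcal{K}}_\rho||\delta^{\mathcal{K}}_{\chi_\rho})$ that follows from monotonicity of relative entropy under the dephasing map $\Delta^{\mathcal{K}}$, is routine.
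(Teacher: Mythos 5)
Your first half and your reduction of the second half are exactly what the paper does: the lower bound via $\mathcal{I}^{\mathcal{K}}_N\subseteq\mathcal{I}_N$, and the observation that, by Eqs.~(\ref{eq:coh_e}), (\ref{eq:discord_e}), (\ref{eq:diss_e}), the upper bound $C^{\mathcal{K}}(\rho)\leq Q(\rho)+D^{\mathcal{K}}(\rho)$ is equivalent to the single entropy inequality $S(\delta^{\mathcal{K}}_\rho)\leq S(\delta^{\mathcal{K}}_{\chi_\rho})$ (this is precisely the inequality step in the paper's displayed chain, whose last line, incidentally, should read $Q(\rho)+D^{\mathcal{K}}(\rho)$ rather than $Q(\rho)+C^{\mathcal{K}}(\rho)$). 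But at that point you stop: you state that the real content lies in using the extremality of the optimal product basis $\mathcal{K}_0$ defining $\chi_\rho$, and you sketch possible directions (a variational argument, a choice among nearest classical states) without carrying any of them out. As a proof, therefore, the proposal is incomplete: the second inequality of the theorem is never established.

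The obstacle you flag is genuine, and it is worth saying plainly that the paper's own justification does not remove it. The paper attributes the step to ``the definition of coherence,'' but minimality of the dephased state only gives $S(\rho||\delta^{\mathcal{K}}_\rho)\leq S(\rho||\delta^{\mathcal{K}}_{\chi_\rho})$, which, since both states are diagonal in $\mathcal{K}$, unpacks to the trivial statement $S(\delta^{\mathcal{K}}_\rho||\delta^{\mathcal{K}}_{\chi_\rho})\geq 0$ and not to $S(\delta^{\mathcal{K}}_\rho)\leq S(\delta^{\mathcal{K}}_{\chi_\rho})$; equivalently, the witness bound $C^{\mathcal{K}}(\rho)\leq S(\rho||\delta^{\mathcal{K}}_{\chi_\rho})$ exceeds $Q(\rho)+D^{\mathcal{K}}(\rho)$ by the cross-entropy defect $\mathrm{Tr}[(\delta^{\mathcal{K}}_{\chi_\rho}-\delta^{\mathcal{K}}_\rho)\log\delta^{\mathcal{K}}_{\chi_\rho}]$, whose sign is not controlled by data processing, by the non-existent triangle inequality, or by the cheap bound $S(\delta^{\mathcal{K}}_{\chi_\rho})\geq S(\chi_\rho)$ (which, as you note, compares against $S(\chi_\rho)\leq S(\delta^{\mathcal{K}}_\rho)$ and so points the wrong way). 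Any complete argument must exploit that $\chi_\rho=\delta^{\mathcal{K}_0}_\rho$ for the entropy-minimizing product basis $\mathcal{K}_0$ --- for instance, it is exactly the first-order optimality of $\mathcal{K}_0$ that defeats the naive perturbative attempts to make $S(\delta^{\mathcal{K}}_{\chi_\rho})<S(\delta^{\mathcal{K}}_\rho)$ by rotating $\mathcal{K}$ slightly away from $\mathcal{K}_0$ --- and neither your proposal nor the paper's one-line remark supplies that argument. So you have correctly located the missing idea, but you have not provided it, and the theorem's upper bound remains unproven in your write-up.
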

\begin{proof}
The first inequality is obvious. Now
from Eqs.~(\ref{eq:coh_e}), ~(\ref{eq:discord_e}) and~(\ref{eq:diss_e}), we have
\begin{align}
C^{\mathcal{K}}(\rho)=&\ S(\delta^{\mathcal{K}}_\rho)-S(\rho)\nonumber\\
\leq&\ S(\chi_\rho)-S(\rho)+S(\delta^{\mathcal{K}}_{\chi_\rho})-S(\chi_\rho)\nonumber\\
=&\ Q(\rho)+C^{\mathcal{K}}(\rho),
\end{align}
where the inequality comes from the definition of coherence.
\end{proof}

We denote the quantity $L(\rho):=S(\delta^{\mathcal{K}}_{\chi_\rho})-S(\delta^{\mathcal{K}}_\rho)$;
 it describes the entropic costs caused by the optimal dephasing measurement
with respect to the discord. Clearly, we have $L(\rho)\geq 0$, and also
\begin{equation}
C^{\mathcal{K}}(\rho)+L(\rho)=Q(\rho)+D^{\mathcal{K}}(\rho).
\end{equation}
This relation corresponds to the closed path in Fig.~\ref{Fig1}
and means that the sum of the nonclassical correlation
and coherent dissonance is equal to the sum of the total coherence and the entropic costs.

To conclude this section, we provide two examples to illustrate the different contributions for the coherence. The first one is a maximally coherent state $|+\rangle|+\rangle$ on a two qubit system, where $|+\rangle=\frac{1}{\sqrt{2}}(|0\rangle+|1\rangle)$ is a maximally coherent state on one qubit system. Obviously, we have $C^{\mathcal{K}}=D^{\mathcal{K}}=2$, but $Q=L=0$. This implies that the coherence comes solely from the coherence located in the individual qubits. The second example is a Bell state $|\psi\rangle=\frac{1}{\sqrt{2}}(|00\rangle+|11\rangle)$. It is easy to check that $C^{\mathcal{K}}=Q=1$, and $D^{\mathcal{K}}=L=0$. This means that the coherence comes solely from the correlation between subsystems.
\begin{figure}
\centering\includegraphics[width=1.5in]{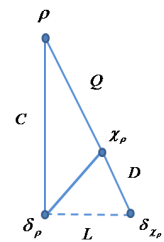}
\caption{Quantum coherence in a multipartite state $\rho$. The total coherence $C^{\mathcal{K}}(\rho)$
comes from quantum correlation $Q(\rho)$ and coherent dissonance $D^{\mathcal{K}}(\rho)$.
With the aid of $L(\rho)$ the closed path is additive, and has a clear operational interpretation. In general,
we have $D^{\mathcal{K}}(\rho)\leq S(\chi_\rho||\delta^{\mathcal{K}}_\rho)$.}\label{Fig1}
\end{figure}

\section{Additivity relations via local measurement}\label{sec:com}
Most of nonclassical correlations measures are limited
to studies of bipartite correlations only as the original concept of
discord, which involves bipartite
system with classicality for only one subsystem.In this section, we will discuss the connection between nonclassical correlations and coherence in the bipartite system by the act of the local measurement on one or both subsystems. Let $\rho^{AB}$
be a quantum state in a bipartite system $AB$, and $\{|a\rangle^A\}$ be a given basis of $A$.
The quantum incoherent relative entropy of $\rho$ with respect to $\{|a\rangle^A\}$ is defined as
\begin{equation}
C^{A|B}(\rho^{AB}):=\min_{\sigma^{AB}\in \mathcal{I}^a}S(\rho^{AB}||\sigma^{AB}).
\end{equation}
Here $\mathcal{I}^a$ is the set of all quantum incoherent states
of the form, i.e., $\sum_{a}p_a|a\rangle^A\langle a|\otimes\rho^B_a$,
where $\rho^B_a$ are quantum states on $B$, and $\{p_a\}$ is a probability distribution~\cite{Chitambar16}.
It has been proved that the quantum incoherent relative entropy can be written as
\begin{equation}\label{eq:qire}
C^{A|B}(\rho^{AB})=S(\sigma^{AB}_\rho)-S(\rho^{AB}),
\end{equation}
where $\sigma^{AB}_\rho=\sum_a|a\rangle^A\langle a|\rho^{AB}|a\rangle^A\langle a|$.

Furthermore, when the basis $\{|a\rangle^A\}$ varies, we obtain the set of all classical-quantum states, denoted $\mathcal{I}$,
clearly, $\mathcal{I}_2\subset\mathcal{I}$.
Then, the one-way quantum discord~\cite{Horodecki05} is defined as
\begin{equation}\label{def:1wqd}
Q^{A|B}(\rho^{AB})=\min_{\omega^{AB}\in\mathcal{I}}S(\rho^{AB}||\omega^{AB}).
\end{equation}
Similar to Eq.~(\ref{eq:discord_e}), the one-way quantum discord has also a useful expression, i.e.,
\begin{equation}\label{eq:1discord_e}
Q^{A|B}(\rho^{AB})=S(\omega^{AB}_\rho)-S(\rho^{AB}),
\end{equation}
where $S(\omega^{AB}_\rho)=\min_{\{|i\rangle^A\}}S(\sum_{i}|i\rangle^A\langle i|\rho^{AB}|i\rangle^A\langle i|)$,
and the minimization is taken over all orthogonal bases of subsystem $A$.
Note that for the nearest classical-quantum state $\omega^{AB}_\rho$, its reduced state on $A$
may not be diagonalizable under the basis $\{|a\rangle\}$, thus it may still have some coherence, which can be captured with the help of coherence just defined above:
\begin{equation}
D^{A|B}(\rho^{AB}):=C^{A|B}(\omega^{AB}_\rho)=\min_{\sigma^{AB}\in\mathcal{I}^a}S(\omega^{AB}_\rho||\sigma^{AB}).
\end{equation}
Similar to coherent dissonance, we call it the one-way
coherence dissonance.
Then Eq.~(\ref{eq:qire}) yields a closed expression for one-way coherence dissonance
\begin{equation}\label{eq:1diss_e}
D^{A|B}(\rho^{AB})=S(\sigma^{AB}_{\omega_\rho})-S(\omega^{AB}_\rho).
\end{equation}
Equipped with these relations, we are now in a position to show a close connection between the one-way quantum discord $Q^{A|B}$,
quantum incoherent relative entropy $C^{A|B}$ and the one-way coherence dissonance $D^{A|B}$, i.e.,
\begin{equation}\label{eq:1dist_cc}
Q^{A|B}(\rho^{AB})\leq C^{A|B}(\rho^{AB})\leq Q^{A|B}(\rho^{AB})+D^{A|B}(\rho^{AB}).\nonumber
\end{equation}
This relation is similar to Theorem~\ref{theorem_qcd_1}, and provides an upper bound of quantum incoherent relative entropy.

Next, we compare the coherence and the discord defined in~\cite{Zurek01}. From Eq.~(\ref{eq:qire}),
after some simple algebraic computation, we obtain a fundamental inequality as follows:
\begin{equation}\label{eq:c_qire_c}
C^{A|B}(\rho^{AB})\leq C(\rho^{AB})-C(\rho^B).
\end{equation}
This relation has a clear operational interpretation: the difference of coherence between the total
system and one of its subsystems is no less than the quantum incoherent relative entropy
via the other subsystem.

From~(\ref{eq:qire}), it is easy to see the following inequality:
\begin{equation}\label{eq:c_qire_c_2}
\Theta^{A|B}(\rho^{AB})+C(\rho^A)\leq C^{A|B}(\rho^{AB}),
\end{equation}
where $\Theta^{A|B}(\rho^{AB})=S_{\rho^{AB}}(A:B)-\min_{\{|i\rangle^A\}}S_{\rho^{AB}_{|i\rangle^A}}(A:B)$
is another definition of discord in~\cite{Zurek01}.
Here $S_{\rho^{AB}}(A:B)$ is the quantum mutual information of $\rho^{AB}$, and $\rho^{AB}_{|i\rangle^A}=\sum_i|i\rangle^A\langle i|\rho^{AB}|i\rangle^A\langle i|$ for any
orthonormal basis $\{|i\rangle^A\}$ of $A$.
Then, we have the following inequality, which is tighter than the subadditivity of coherence in~\cite{Xi15}.
\begin{theorem}\label{thm:thm2}
For any bipartite quantum state $\rho^{AB}$, the following inequality holds:
\begin{equation}\label{eq:c_qire_c_3}
\Theta^{A|B}(\rho^{AB})+C(\rho^A)+C(\rho^B)\leq C(\rho^{AB}).
\end{equation}
\end{theorem}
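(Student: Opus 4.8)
The plan is to obtain~(\ref{eq:c_qire_c_3}) by concatenating the two bounds already derived in this section. Indeed,~(\ref{eq:c_qire_c_2}) reads $\Theta^{A|B}(\rho^{AB})+C(\rho^A)\leq C^{A|B}(\rho^{AB})$, while~(\ref{eq:c_qire_c}) reads $C^{A|B}(\rho^{AB})\leq C(\rho^{AB})-C(\rho^B)$. Chaining the two and transposing $C(\rho^B)$ to the left-hand side gives $\Theta^{A|B}(\rho^{AB})+C(\rho^A)+C(\rho^B)\leq C(\rho^{AB})$, which is the assertion. So the proof itself is essentially one line, and all the substance is carried by the two preceding inequalities.

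To make the argument self-contained I would re-establish those two ingredients directly from the closed form~(\ref{eq:qire}). For~(\ref{eq:c_qire_c}), the key observation is that the product-basis dephasing of $\rho^{AB}$ factors as the $A$-dephasing (which produces $\sigma^{AB}_\rho$) followed by the $B$-dephasing; hence $C(\rho^{AB})-C^{A|B}(\rho^{AB})$ equals the entropy increase caused by the subsequent $B$-dephasing, i.e.\ the relative-entropy coherence $C(\sigma^{AB}_\rho)$ of the already $A$-incoherent state $\sigma^{AB}_\rho$. Since coherence is monotone under partial trace (a partial trace of an incoherent state being incoherent) and tracing out $A$ from $\sigma^{AB}_\rho$ returns $\rho^B$, this quantity is bounded below by $C(\rho^B)$. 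For~(\ref{eq:c_qire_c_2}), I would expand $\Theta^{A|B}$ through $S_{\rho^{AB}}(A:B)=S(\rho^A)+S(\rho^B)-S(\rho^{AB})$, insert the fixed reference basis $\{|a\rangle^A\}$ into the extremization over bases of $A$, and regroup the resulting entropies into the closed forms of $C^{A|B}(\rho^{AB})$ and $C(\rho^A)$, using that the $B$-marginal is unchanged by $A$-dephasing.

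The step to watch is the basis bookkeeping in~(\ref{eq:c_qire_c_2}): $\Theta^{A|B}(\rho^{AB})$ carries an extremization over all orthonormal bases of $A$, whereas $C^{A|B}(\rho^{AB})$ and $C(\rho^A)$ are tied to the single reference basis $\{|a\rangle^A\}$, so one must check that plugging that particular basis into the extremization yields a bound pointing in the required direction; this is precisely where the form of the discord of~\cite{Zurek01} is used. Once this is in place the concatenation is purely algebraic, and I would note that the non-negativity $\Theta^{A|B}(\rho^{AB})\geq 0$ makes this bound tighter than the subadditivity of coherence of~\cite{Xi15}, the extra term $\Theta^{A|B}(\rho^{AB})$ being the sharpening.
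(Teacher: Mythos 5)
Your proposal is correct and is exactly the paper's argument: the paper proves Theorem~\ref{thm:thm2} by directly combining Eqs.~(\ref{eq:c_qire_c}) and (\ref{eq:c_qire_c_2}), just as you do, with those two inequalities themselves obtained from the closed form~(\ref{eq:qire}) along the lines you sketch. Your extra care about the basis optimization in $\Theta^{A|B}$ (the fixed reference basis being a feasible choice in the Ollivier--Zurek extremization, so the bound points the right way) is a sensible elaboration of what the paper leaves implicit.
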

\begin{proof}
This is direct from Eqs.~(\ref{eq:c_qire_c}) and (\ref{eq:c_qire_c_2}).
\end{proof}
The above theorem shows that the total coherence of a quantum state includes genuine local coherence located in the individual subsystems
and quantum
correlation between them.

To illustrate the inequality presented in Theorem~\ref{thm:thm2}, let us consider two simple examples. The first one is a separable state with the reduced states both being maximally mixed~\cite{Datta08}:
\begin{align}
\rho^{AB}= \frac{1}{4}[&\ |+\rangle\langle+|\otimes|0\rangle\langle 0|+|-\rangle\langle-|\otimes|1\rangle\langle 1|\nonumber\\
&\ |0\rangle\langle 0|\otimes |-\rangle\langle -|+|1\rangle\langle 1|\otimes |+\rangle\langle +|\ ].\nonumber
\end{align}
It is easy to show that $\Theta^{A|B}(\rho^{AB})\thickapprox0.311$, $C(\rho^{AB})=0.5$, and $C(\rho^A)=C(\rho^B)=0$.
The second example is the Werner state
\begin{equation}
\rho^{AB}=(1-p)\frac{I}{4}+p|\psi\rangle\langle\psi|,\nonumber
\end{equation}
where $|\psi\rangle=\frac{1}{\sqrt{2}}(|00\rangle+|11\rangle)$ is a Bell state, and $p\in[0,1]$. We know that its
discord is greater than $0$ when $p>0$
and it is separable when $p\leq \frac{1}{3}$.
Clearly, we have $C(\rho^A)=C(\rho^B)=0$, and from~\cite{Luo08},
the nearest classical state of $\rho^{AB}$ is the closet incoherent state, and then
$\Theta^{A|B}(\rho^{AB})=C(\rho^{AB})$. That is to say, for Bell-diagonal states, the equality holds in Eq.(\ref{eq:c_qire_c_3}). However, the question remains open for general mixed states.

The discord above is defined to be independent of
measurements by requiring the optimization over all
measurements. However, some references
where only a particular measurement is relevant, and one can define the measurement dependent
discord. In this situation, we can provide a more tighter form for the relation~(\ref{eq:c_qire_c_3}). But, we know the discord does not involve any optimization, this results in overestimation of the amount of nonclassical correlations~\cite{Modi12}, and it involves no optimization it
is not a particularly good measure of correlations~\cite{Brodutch12}.

To provide a tighter low bound for the relation~(\ref{eq:c_qire_c_3}), we try to consider the symmetric discord~\cite{Wu09}. Recall that the symmetric discord is defined as
\begin{equation}\label{eq:two_discord_def}
\Theta(\rho^{AB})=S_{\rho^{AB}}(A:B)-\min_{\{|ij\rangle^{AB}\}}S_{\rho^{AB}_{\{|ij\rangle^{AB}\}}}(A:B),
\end{equation}
where $\rho^{AB}_{\{|ij\rangle^{AB}\}}=\sum_{ij}|ij\rangle^{AB}\langle ij|\rho^{AB}|ij\rangle^{AB}\langle ij|$,
and the set $\{|ij\rangle^{AB}\}$ constitutes a product orthonormal basis of $AB$. Clearly, we have $\Theta^{A|B}(\rho^{AB})\leq\Theta(\rho^{AB})$.
The following theorem presents an improved lower bound for the coherence.
\begin{theorem}
For any bipartite quantum state $\rho^{AB}$, the following inequality holds:
\begin{equation}\label{eq:c_qire_c_4}
\Theta(\rho^{AB})+C(\rho^A)+C(\rho^B)\leq C(\rho^{AB}).
\end{equation}
\end{theorem}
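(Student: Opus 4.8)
The plan is to derive the inequality not by chaining the one-sided relations~(\ref{eq:c_qire_c})--(\ref{eq:c_qire_c_2}) --- these give only Theorem~\ref{thm:thm2}, with the weaker $\Theta^{A|B}$ --- but directly, via an exact entropic identity for the ``coherence defect'' $C(\rho^{AB})-C(\rho^{A})-C(\rho^{B})$. Fix the reference product basis $\mathcal{K}$. By the closed form~(\ref{eq:coh_e}), $C(\rho^{AB})=S(\delta^{\mathcal{K}}_{\rho^{AB}})-S(\rho^{AB})$, and likewise $C(\rho^{A})=S(\delta^{\mathcal{K}}_{\rho^{A}})-S(\rho^{A})$ and $C(\rho^{B})=S(\delta^{\mathcal{K}}_{\rho^{B}})-S(\rho^{B})$, where $\delta^{\mathcal{K}}_{\rho^{A}}$ and $\delta^{\mathcal{K}}_{\rho^{B}}$ are the reduced states dephased in the induced local bases. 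The one elementary input I need is that dephasing commutes with the partial trace, so that $\delta^{\mathcal{K}}_{\rho^{A}}=(\delta^{\mathcal{K}}_{\rho^{AB}})^{A}$ and $\delta^{\mathcal{K}}_{\rho^{B}}=(\delta^{\mathcal{K}}_{\rho^{AB}})^{B}$. Substituting the three closed forms and regrouping, the single-party entropies assemble into $S(\delta^{\mathcal{K}}_{\rho^{A}})+S(\delta^{\mathcal{K}}_{\rho^{B}})-S(\delta^{\mathcal{K}}_{\rho^{AB}})$, which by the marginal identity is precisely the mutual information $S_{\delta^{\mathcal{K}}_{\rho^{AB}}}(A:B)$; hence
\begin{equation}
C(\rho^{AB})-C(\rho^{A})-C(\rho^{B})=S_{\rho^{AB}}(A:B)-S_{\delta^{\mathcal{K}}_{\rho^{AB}}}(A:B)\ge 0,\nonumber
\end{equation}
the nonnegativity being data processing for the dephasing channel. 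Thus the excess of the total coherence over the sum of the local coherences equals exactly the mutual information destroyed by the reference product measurement; this already refines the subadditivity $C(\rho^{A})+C(\rho^{B})\le C(\rho^{AB})$.

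To close the argument I would note that $\delta^{\mathcal{K}}_{\rho^{AB}}$ is nothing but $\rho^{AB}_{\{|ij\rangle^{AB}\}}$ for the choice $\{|ij\rangle^{AB}\}=\mathcal{K}$. By the definition~(\ref{eq:two_discord_def}), $\Theta(\rho^{AB})$ is the least mutual information lost over all product orthonormal bases, i.e. $\Theta(\rho^{AB})=\min_{\{|ij\rangle^{AB}\}}\big[S_{\rho^{AB}}(A:B)-S_{\rho^{AB}_{\{|ij\rangle^{AB}\}}}(A:B)\big]$ (equivalently, $S_{\rho^{AB}}(A:B)$ minus the maximal classical correlation surviving a product measurement); since $\mathcal{K}$ is one admissible choice it furnishes an upper bound,
\begin{equation}
\Theta(\rho^{AB})\le S_{\rho^{AB}}(A:B)-S_{\delta^{\mathcal{K}}_{\rho^{AB}}}(A:B).\nonumber
\end{equation}
Combining the two displays gives $\Theta(\rho^{AB})+C(\rho^{A})+C(\rho^{B})\le C(\rho^{AB})$, with equality exactly when $\mathcal{K}$ is an optimal measurement basis for $\Theta$ --- consistent with the Bell-diagonal remark after~(\ref{eq:c_qire_c_3}). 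Because $\Theta^{A|B}(\rho^{AB})\le\Theta(\rho^{AB})$, this also re-proves and strengthens Theorem~\ref{thm:thm2}.

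The argument is short; the two points that need care are (i) the marginal identity $\delta^{\mathcal{K}}_{\rho^{A}}=(\delta^{\mathcal{K}}_{\rho^{AB}})^{A}$, which is what lets the single-party entropies reassemble into a mutual information, and (ii) the orientation of the optimization in $\Theta$ --- the content of~(\ref{eq:two_discord_def}) that is used is that $\Theta$ is the \emph{minimal} measurement-induced loss of mutual information, so that any fixed product basis, in particular the reference basis defining $C$, can only lose at least $\Theta$. This is also the structural reason the symmetric case is cleaner than the one-sided one: here the reference dephasing acts on both parties, so it is itself an admissible product measurement and directly comparable with the optimal one in $\Theta$; in~(\ref{eq:c_qire_c})--(\ref{eq:c_qire_c_2}) the reference dephasing is not a one-sided measurement, which is precisely what forces the two-step detour through $C^{A|B}$.
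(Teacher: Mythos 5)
Your proof is correct and follows essentially the same route as the paper's: establish the fixed-basis identity $C(\rho^{AB})-C(\rho^{A})-C(\rho^{B})=S_{\rho^{AB}}(A:B)-S_{\delta^{\mathcal{K}}_{\rho^{AB}}}(A:B)$ (which the paper simply cites from~\cite{Xi15}, while you rederive it via the marginal identity for dephasing) and then compare with the optimization over product bases defining $\Theta$ in Eq.~(\ref{eq:two_discord_def}). Your added remarks --- nonnegativity via data processing, the equality condition, and reading $\Theta$ as the \emph{minimal} measurement-induced loss of mutual information --- are sound clarifications of the same argument rather than a different proof.
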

\begin{proof}
 Without loss of generality, we assume that $\{|ij\rangle^{AB}\}$ is the fixed product basis on $AB$. Then, we perform a measurement on $AB$ with respect to this basis, from the result~\cite{Xi15}, we have
\begin{equation}
C(\rho^{AB})-C(\rho^A)-C(\rho^B)=S_{\rho^{AB}}(A:B)-S_{\rho^{AB}_{|ij\rangle^{AB}}}(A:B).\nonumber
\end{equation}
Taking the optimization over all the orthonormal measurements on $AB$ completes the proof of the theorem.
\end{proof}

Note that quantum mutual information for bipartite quantum systems is non-negative, and
is viewed as the total correlation between the two subsystems. However, it is no longer
true for three or more party quantum systems, there are at least two different definitions for quantum mutual information, it is possible that three-party quantum mutual information can be negative~\cite{Kumar17}. In this paper, we take Modi's suggestion~\cite{Modi10}, for an $N$-partite quantum state, its quantum mutual information is defined by the relative entropy between
it and the product state obtained from its reduced states, and bears the interpretation of total correlation between all the subsystems. Thus, we now extend our results Eq.~(\ref{eq:c_qire_c_2}) and~(\ref{eq:c_qire_c_4}) to the multipartite setting.

\begin{theorem}\label{theorem_qcd}
For any $N$-partite quantum state $\rho^{12\cdots N}$, we have
\begin{equation}\label{eq:multi_dis}
\sum_{i=1}^{N-1}\Theta^{i|i+1\cdots N}(\rho^{i i+1\cdots N})+\sum_{i=1}^N C(\rho^i)\leq C(\rho^{12\cdots N})
\end{equation}
and
\begin{equation}\label{eq:multi_dis_5}
\Theta(\rho^{12\cdots N})+\sum_{i=1}^N C(\rho^i)\leq C(\rho^{12\cdots N}),
\end{equation}
where $\Theta^{i|i+1\cdots N}(\rho^{i i+1\cdots N})$ is the quantum discord based the
measurement on the subsystem $i$, $\Theta(\rho^{12\cdots N})$ is a generalization of quantum discord in Eq.~(\ref{eq:two_discord_def}), and $C(\rho^i)$ is relative entropy of coherence of the reduced system $i$.
\end{theorem}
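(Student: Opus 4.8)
The strategy is to deduce both inequalities from the bipartite results of Section~\ref{sec:com}, treating a consecutive block of parties as a single subsystem equipped with its induced product basis.

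For~(\ref{eq:multi_dis}) the plan is to iterate Theorem~\ref{thm:thm2} along the nested cuts $i\,|\,(i+1)\cdots N$, $i=1,\dots,N-1$. For a fixed $i$, regard $\rho^{i\cdots N}$ as a bipartite state whose first party is subsystem $i$ and whose second party is the block $i+1\cdots N$ carrying its induced product basis. Since dephasing in a product basis commutes with the partial trace, the reduced-state coherences occurring in this bipartite picture are precisely $C(\rho^{i})$ and $C(\rho^{i+1\cdots N})$ as used elsewhere in the paper, so Theorem~\ref{thm:thm2} applied to this cut gives
\begin{equation}
\Theta^{i|i+1\cdots N}(\rho^{i\cdots N})+C(\rho^{i})+C(\rho^{i+1\cdots N})\le C(\rho^{i\cdots N}).\nonumber
\end{equation}
Summing over $i=1,\dots,N-1$, the block coherences telescope, namely $\sum_{i=1}^{N-1}C(\rho^{i\cdots N})-\sum_{i=1}^{N-1}C(\rho^{i+1\cdots N})=C(\rho^{12\cdots N})-C(\rho^{N})$, and rearranging the surviving single-party terms gives exactly~(\ref{eq:multi_dis}). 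Equivalently, this is an induction on $N$ with base case $N=2$ supplied by Theorem~\ref{thm:thm2}.

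For~(\ref{eq:multi_dis_5}) the plan is to reproduce the proof of~(\ref{eq:c_qire_c_4}), replacing its two-party identity from~\cite{Xi15} by its $N$-party analogue. Using the closed form $C(\rho)=S(\delta^{\mathcal{K}}_{\rho})-S(\rho)$, together with the fact that the marginal of $\delta^{\mathcal{K}}_{\rho}$ on party $i$ is the fully dephased state of $\rho^{i}$ in the $i$-th reference basis, a short computation with Modi's multipartite mutual information~\cite{Modi10}, $S_{\rho^{12\cdots N}}(1:2:\cdots:N)=\sum_{i=1}^{N}S(\rho^{i})-S(\rho^{12\cdots N})$, gives, for the fixed reference basis $\mathcal{K}$,
\begin{equation}
C(\rho^{12\cdots N})-\sum_{i=1}^{N}C(\rho^{i})=S_{\rho^{12\cdots N}}(1:2:\cdots:N)-S_{\delta^{\mathcal{K}}_{\rho}}(1:2:\cdots:N).\nonumber
\end{equation}
Since $\delta^{\mathcal{K}}_{\rho}$ is obtained by dephasing $\rho$ in one particular product orthonormal basis of $\mathcal{H}$, it is among the states entering the optimization that defines the generalized discord $\Theta(\rho^{12\cdots N})$; carrying out that optimization in the identity above then yields~(\ref{eq:multi_dis_5}).

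The part needing care is the bookkeeping around the grouping of parties: verifying that the product-basis structure, the reduced-state coherences, and the choice of multipartite mutual information (Modi's, rather than the alternative that may be negative) are mutually consistent, and that in the final step the reference basis $\mathcal{K}$ sits on the correct side of the optimization defining $\Theta$. With these conventions pinned down, the remaining steps—the telescoping sum and the one-line entropy identity—are routine.
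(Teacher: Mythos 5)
Your proof is correct and follows exactly the route the paper intends: the paper states Theorem~\ref{theorem_qcd} without an explicit proof, presenting it as the extension of Theorems~\ref{thm:thm2} and the symmetric-discord bound~(\ref{eq:c_qire_c_4}), and your telescoping of Theorem~\ref{thm:thm2} over the nested cuts $i\,|\,(i{+}1)\cdots N$ together with the $N$-party version of the entropy identity from~\cite{Xi15} (using Modi's total correlation) is precisely that extension, with the bookkeeping done correctly. The only caveat you rightly flag---that the reference basis $\mathcal{K}$ must sit on the correct side of the optimization defining $\Theta$---is inherited from the paper itself: Eq.~(\ref{eq:two_discord_def}) as written (with a minimum of the post-measurement mutual information) would reverse the inequality, so it must be read with the standard symmetric-discord convention (optimization minimizing the loss of mutual information), under which both the paper's Theorem~3 proof and your final step go through.
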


\section{Coherence in entanglement distribution}\label{sec:ms}
In the previous section, we have given a clear operational interpretation for Eq.~(\ref{eq:c_qire_c}).
Here, we discuss the general case via entanglement distribution.
The general scenario for entanglement distribution is consider in~\cite{Cubitt03,Streltsov12,Chuan12}.
We assume that two agents, Alice and Bob, have access to a tripartite quantum state $\rho^{ABR}$, with Alice holding $A$ and $R$, and Bob holding $B$.
The entanglement distribution is realized by sending the particle $R$ from Alice to Bob.
If the quantum channel used for the transmission is noiseless,
the amount of entanglement distributed in this process is quantified by the difference
between the final amount of entanglement $E^{A|BR}$ and the initial amount of entanglement $E^{AR|B}$.
It has be proven in~\cite{Streltsov12,Chuan12} that
 the amount of distributed entanglement cannot exceed the quantum
 discord between $R$ and the remaining systems $AB$.
Then, we will give a bound on the increase of coherence during the entanglement distribution.
\begin{theorem}
Given a tripartite state $\rho:=\rho^{ABR}$, the following inequality holds:
\begin{equation}\label{eq:dist_c}
C^{AR|B}(\rho)-C^{A|BR}(\rho)\leq C^{R|AB}(\rho).
\end{equation}
\end{theorem}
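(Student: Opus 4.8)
The plan is to reduce everything to the closed-form expression for the quantum incoherent relative entropy, Eq.~(\ref{eq:qire}), and then to recognize the resulting combination of entropies as a single relative-entropy of coherence expression, exactly as was done to derive Eq.~(\ref{eq:c_qire_c}). Concretely, fix the reference basis on the measured party in each of the three terms. For the term $C^{AR|B}(\rho)$ the measurement is on the composite system $AR$ in the product basis $\{|a\rangle^A\otimes|r\rangle^R\}$; for $C^{A|BR}(\rho)$ it is on $A$ in $\{|a\rangle^A\}$; and for $C^{R|AB}(\rho)$ it is on $R$ in $\{|r\rangle^R\}$. Writing $\Delta_X$ for the completely dephasing (measurement) map on system $X$ in the relevant basis, Eq.~(\ref{eq:qire}) gives
\begin{align}
C^{AR|B}(\rho)&=S\big(\Delta_{AR}(\rho)\big)-S(\rho),\nonumber\\
C^{A|BR}(\rho)&=S\big(\Delta_{A}(\rho)\big)-S(\rho),\nonumber\\
C^{R|AB}(\rho)&=S\big(\Delta_{R}(\rho)\big)-S(\rho).\nonumber
\end{align}
Subtracting, the $-S(\rho)$ terms partly cancel and the claimed inequality (\ref{eq:dist_c}) becomes the purely entropic statement
\begin{equation}
S\big(\Delta_{AR}(\rho)\big)-S\big(\Delta_{A}(\rho)\big)\leq S\big(\Delta_{R}(\rho)\big)-S(\rho).\nonumber
\end{equation}

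Next I would observe that $\Delta_{AR}=\Delta_R\circ\Delta_A$ (the dephasing on the product basis of $AR$ factorizes into independent dephasings on $A$ and on $R$, which commute). So the left-hand side is $S\big(\Delta_R(\Delta_A(\rho))\big)-S\big(\Delta_A(\rho)\big)$, which is precisely the relative entropy of coherence of the state $\Delta_A(\rho)$ with respect to the basis $\{|r\rangle^R\}$ on the $R$ factor, i.e. $C^{R|AB}\big(\Delta_A(\rho)\big)$ in the notation of Eq.~(\ref{eq:qire}) applied to the state $\Delta_A(\rho)$. Similarly the right-hand side is $C^{R|AB}(\rho)=S\big(\Delta_R(\rho)\big)-S(\rho)$. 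Thus the whole theorem is equivalent to the monotonicity statement
\begin{equation}
C^{R|AB}\big(\Delta_A(\rho)\big)\leq C^{R|AB}(\rho),\nonumber
\end{equation}
that is: dephasing a \emph{different} subsystem (here $A$) cannot increase the quantum incoherent relative entropy associated with party $R$. This I would prove from the data-processing inequality for relative entropy: the map $\Delta_A$ is a fixed CPTP operation that does not touch $R$, and the nearest $R$-incoherent state $\sigma_\rho^{R|AB}$ to $\rho$ is mapped by $\Delta_A$ to a state that is still $R$-incoherent (dephasing on $A$ preserves the classical-on-$R$ structure $\sum_r p_r|r\rangle\langle r|^R\otimes\tau_r$). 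Hence
\begin{equation}
C^{R|AB}\big(\Delta_A(\rho)\big)\le S\big(\Delta_A(\rho)\,\|\,\Delta_A(\sigma_\rho^{R|AB})\big)\le S\big(\rho\,\|\,\sigma_\rho^{R|AB}\big)=C^{R|AB}(\rho),\nonumber
\end{equation}
where the first inequality is suboptimality of a particular incoherent state and the second is monotonicity of relative entropy under the CPTP map $\Delta_A$.

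The only genuinely delicate point is the bookkeeping of which basis is fixed in each term and the claim $\Delta_{AR}=\Delta_R\circ\Delta_A$; once the three coherence quantities are written via Eq.~(\ref{eq:qire}) with a consistent choice of product basis $\{|a\rangle^A|r\rangle^R\}$, the identity is immediate because projectors on a product basis factor as tensor products and the two partial dephasings therefore commute and compose to the full dephasing. I expect the main obstacle to be purely expository: making sure the optimization in $C^{R|AB}$ is over the same $R$-basis used implicitly in $C^{AR|B}$, so that the reduction to the single-map data-processing argument is clean rather than requiring a separate optimization step. If one instead wants the optimized (basis-independent) versions, the same argument goes through by noting that an optimal $R$-basis for $\rho$ is a (generally suboptimal) feasible choice for $\Delta_A(\rho)$, which only strengthens the inequality in the right direction.
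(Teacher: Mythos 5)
Your proposal is correct and takes essentially the same route as the paper: both express the three quantities via Eq.~(\ref{eq:qire}), factor the full dephasing on $AR$ into commuting partial dephasings so that $C^{AR|B}(\rho)$ telescopes into a sum of one genuine term and one term evaluated on a partially dephased state, and finish by monotonicity of the quantum incoherent relative entropy under the remaining partial dephasing (the paper dephases $R$ first and uses $C^{A|BR}(\rho')\leq C^{A|BR}(\rho)$, while you dephase $A$ first and use $C^{R|AB}(\Delta_A(\rho))\leq C^{R|AB}(\rho)$ --- the same argument with the roles of $A$ and $R$ interchanged). Your explicit data-processing justification of that monotonicity step is a more detailed rendering of the paper's one-line appeal to non-increase under incoherent operations, but it is not a different method.
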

\begin{proof}
Let $\{|j\rangle^A\}$ and $\{|i\rangle^R\}$ be the fixed basis on $A$ and $R$, respectively.
Then, we denote the states $$\rho^\prime=\sum_i|i\rangle^R\langle i|\rho|i\rangle^R\langle i|$$
and
$$\rho^*=\sum_{i,j}(|j\rangle^A\langle j|\otimes|i\rangle^R\langle i|)\rho(|j\rangle^A\langle j|\otimes|i\rangle^R\langle i|)$$
to arise from $\rho$ via the local orthonormal  measurements $\{|i\rangle^R\langle i|\}$
on the subsystem $R$ and the local orthonormal measurement $\{|j\rangle^A\langle j|\otimes|i\rangle^R\langle i|\}$
on the subsystem $AR$, respectively. This implies that the states $\rho^\prime$ and $\rho^*$
are the nearest quantum incoherent states for the sake of $C^{R|AB}(\rho)$ and $C^{AR|B}(\rho)$, respectively.
Then we have
\begin{align}
C^{AR|B}(\rho)=&\ [S(\rho^\prime)-S(\rho)]+[S(\rho^*)-S(\rho^\prime)]\nonumber\\
=&\ C^{R|AB}(\rho)+C^{A|BR}(\rho^\prime)\nonumber\\
\leq&\ C^{R|AB}(\rho)+C^{A|BR}(\rho),
\end{align}
where the inequality comes from the fact that quantum incoherent relative entropy does
not increase under the action of incoherent operation.
\end{proof}
We will now consider the situation where the total state is pure.
One applies Eq.~(\ref{eq:dist_c}) to a
tripartite pure state $|\psi\rangle^{ABR}$, it reduces to
\begin{equation}
S(\tilde{\rho}^{AR})\leq S(\tilde{\rho}^A)+S(\tilde{\rho}^R),
\end{equation}
where the dephased state $\tilde{\rho}^{AR}=\sum_{i,j}(|j\rangle^A\langle j|\otimes|i\rangle^R\langle i|)\rho^{AR}(|j\rangle^A\langle j|\otimes|i\rangle^R\langle i|)$ with the reduced state $\rho^{AR}=\mathrm{Tr}_B(|\psi\rangle^{ABR}\langle \psi|)$.
This is the additivity of entropy for subsystems $A$ and $R$ after the measurements.

Finally, we consider the more general situation in which the channel used for entanglement distribution is noisy.
If Alice uses an incoherent channel $\Lambda^R$ to send her particle $R$ to Bob, they end up in the final state $\rho_f=\Lambda^R(\rho_i)$, where $\rho_i=\rho^{ABR}$ is a tripartite initial state.
In the following theorem we show the amount of coherence in the entanglement distribution.
\begin{theorem}
Given a quantum incoherent channel $\Lambda^R$  and two states $\rho_i$ and $\rho_f$, the following inequality holds:
\begin{equation}\label{eq:dist_c1}
C^{AR|B}(\rho_f)-C^{A|BR}(\rho_i)\leq C^{R|AB}(\rho_f).
\end{equation}
\end{theorem}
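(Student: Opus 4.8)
The plan is to run the proof of the noiseless case, Eq.~(\ref{eq:dist_c}), with the single state $\rho$ replaced throughout by the output state $\rho_f$, and then to close the estimate by tracing one term back to the input $\rho_i$. Fix the reference bases $\{|i\rangle^R\}$ and $\{|j\rangle^A\}$, and write $\mathcal{D}^R$, $\mathcal{D}^A$, $\mathcal{D}^{AR}$ for the corresponding full dephasing channels. First I would set $\rho_f^\prime:=\mathcal{D}^R(\rho_f)$ and $\rho_f^*:=\mathcal{D}^{AR}(\rho_f)=\mathcal{D}^A(\rho_f^\prime)$; by Eq.~(\ref{eq:qire}) these are the nearest quantum incoherent states for $C^{R|AB}(\rho_f)$ and $C^{AR|B}(\rho_f)$, respectively.

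The bookkeeping step is then to telescope $S(\rho_f^*)-S(\rho_f)$ and invoke Eq.~(\ref{eq:qire}) three times to obtain the exact identity
\begin{align}
C^{AR|B}(\rho_f)&=\bigl[S(\rho_f^\prime)-S(\rho_f)\bigr]+\bigl[S(\rho_f^*)-S(\rho_f^\prime)\bigr]\nonumber\\
&=C^{R|AB}(\rho_f)+C^{A|BR}(\rho_f^\prime).\nonumber
\end{align}
This is literally the display appearing in the proof of Eq.~(\ref{eq:dist_c}) with $\rho\mapsto\rho_f$, so nothing new is required here.

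The one substantive point is the bound $C^{A|BR}(\rho_f^\prime)\leq C^{A|BR}(\rho_i)$. I would write $\rho_f^\prime=\mathcal{D}^R\!\bigl(\Lambda^R(\rho_i)\bigr)$, so that $\rho_f^\prime$ is obtained from $\rho_i$ by the channel $\mathcal{D}^R\circ\Lambda^R$, which acts only on $R$ and is incoherent (being a composition of the incoherent channel $\Lambda^R$ with the dephasing $\mathcal{D}^R$). Since the set $\mathcal{I}^a$ of quantum incoherent states with classical subsystem $A$ is mapped into itself by any channel acting on $B$ and $R$, monotonicity of the relative entropy under channels shows that $C^{A|BR}=\min_{\sigma\in\mathcal{I}^a}S(\cdot\|\sigma)$ cannot increase under $\mathcal{D}^R\circ\Lambda^R$; hence $C^{A|BR}(\rho_f^\prime)\leq C^{A|BR}(\rho_i)$. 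Substituting into the identity above and rearranging yields Eq.~(\ref{eq:dist_c1}).

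I expect the main obstacle to be organizing this last step correctly rather than any hard estimate: one must absorb the intermediate dephasing $\mathcal{D}^R$ into the noisy channel $\Lambda^R$ so that a single map on $R$ carries $\rho_i$ to $\rho_f^\prime$, and one must check that this map acts only on the ``quantum'' side $\{B,R\}$ of $C^{A|BR}$, so that $\mathcal{I}^a$ is invariant and the monotonicity argument applies; otherwise the telescoping and the data-processing inequality do not line up with $\rho_i$ on the left-hand side. The incoherence of $\Lambda^R$ makes the composed map incoherent, matching the hypothesis, although for the monotonicity actually used it is enough that $\Lambda^R$ be local on $R$. Everything else is a transcription of the noiseless argument.
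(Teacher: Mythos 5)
Your proposal is correct and takes essentially the same route as the paper: apply the noiseless decomposition to $\rho_f$ (giving $C^{AR|B}(\rho_f)=C^{R|AB}(\rho_f)+C^{A|BR}(\rho_f^\prime)$) and then use monotonicity of the quantum-incoherent relative entropy under a channel acting only on the $R$ (hence $BR$) side to pass back to $\rho_i$. The only difference is cosmetic: you compose the dephasing with $\Lambda^R$ and bound $C^{A|BR}(\rho_f^\prime)\leq C^{A|BR}(\rho_i)$ in one step, whereas the paper first invokes Eq.~(\ref{eq:dist_c}) for $\rho_f$ and then notes $C^{A|BR}(\rho_f)\leq C^{A|BR}(\rho_i)$; your explicit justification via invariance of $\mathcal{I}^a$ and data processing (and the remark that locality on $R$ already suffices) is a slightly more careful statement of the same fact.
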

\begin{proof}
We first apply Eq.~(\ref{eq:dist_c})
to the state $\rho_f$, deriving  $C^{AR|B}(\rho_f)-C^{A|BR}(\rho_f)\leq C^{R|AB}(\rho_f)$. One then completes the proof by noting $C^{A|BR}(\rho_f)\leq C^{A|BR}(\rho_i)$, which follows from the fact that the quantum incoherent relative entropy does not increase under the action of incoherent operation.
\end{proof}
\section{conclusions}\label{sec:con}
We explored the distribution of coherence in multipartite systems. We studied the separation of total coherence in a given quantum state into quantum correlations and
coherent dissonance using the relative entropy as a distance measure, and an additivity relation between them is given. Then,
 some trade-off relations between various types coherence and the discord within the bipartite are given, and we extended our results to multipartite setting. We also discussed the amount of change of coherence in the entanglement distribution by having access to a noiseless and noisy quantum incoherent channel.
Our results have direct importance in the theoretical description of coherence and practical application in quantum algorithms where the key steps involve the coherent superposition. We also hope our results can be used to find the optimal quantum resources in quantum communication tasks.

\emph{Note added}$-$During the completion of this paper we
became aware of the closely related independent work by Bu $et$ $al$~\cite{Bu17}.

\section{Acknowledgments}
Z. Xi is supported by the National Natural Science Foundation of China (Grant Nos. 61671280, and 11531009), and by the Fundamental Research Funds for the Central Universities (GK201502004),
and by the Funded Projects for the Academic Leaders and Academic Backbones, Shannxi Normal University (16QNGG013).

\end{document}